\newtheorem{theorem}{Theorem}[section]
\newtheorem{lemma}[theorem]{Lemma}
\newtheorem{proposition}[theorem]{Proposition}
\newtheorem{observation}[theorem]{Observation}
\newtheorem*{thm:main}{Theorem~\ref{thm:main}}
\tikzstyle{every node}=[circle, draw, fill=black,
\newcommand\abs[1]{\lvert #1\rvert}
\newcommand\ind{\preceq_i}
\newcommand\C{\mathcal C}
\newcommand\R{\mathcal R}
\newcommand\Free{\operatorname{Free}}
\providecommand{\keywords}[1]{\noindent\textbf{\textit{Keywords: }} #1}
\begin{document}

\title{Deciding whether there are infinitely many prime graphs with forbidden induced subgraphs}
\author[$\dagger$]{Robert Brignall}
\affil[$\dagger$]{School of Mathematics and Statistics

 The  Open University, Milton Keynes, UK

{\small robert.brignall@open.ac.uk}}
\author[$\ddagger$]{Hojin Choi}
\author[$\ddagger$]{Jisu Jeong}
\author[$\ddagger$]{Sang-il Oum\thanks{Supported by the National Research Foundation of Korea (NRF) grant funded by the Korea government (MSIT) (No. NRF-2017R1A2B4005020).}}
\affil[$\ddagger$]{Department of Mathematical Sciences

KAIST,  Daejeon, Korea

{\small hjchoi0330@gmail.com, jisujeong89@gmail.com, sangil@kaist.edu}
}

\date\today
\maketitle

\begin{abstract}
  A \emph{homogeneous set} of a graph $G$ is a set $X$ of vertices such that $2\le \lvert X\rvert <\lvert V(G)\rvert$ and no vertex in $V(G)-X$ has both a neighbor and a non-neighbor in $X$. A graph is \emph{prime} if it has no homogeneous set. 
  We present an algorithm to decide whether a class of graphs given by a finite set of forbidden induced subgraphs contains infinitely many non-isomorphic prime graphs.

\end{abstract}
\keywords{modular decomposition, induced subgraph, prime graph,
  homogeneous set}

\section{Introduction}\label{sec:intro}
All graphs in this paper are simple. 
We write $H \ind G$ if a  graph $H$ is isomorphic to an induced subgraph of a graph $G$, which is a subgraph of $G$ obtained by deleting some vertices. 
A class $\C$ of graphs is \emph{hereditary} if for all graphs $H$ and $G$, $H\in \C$ whenever $H\ind G$ and $G\in \C$.
For a set $X$ of graphs, we say $G$ is \emph{$X$-free} if $H\not\ind G$ for all $H\in X$. Let us write $\Free(X)$ to denote the class of $X$-free graphs. 
It is clear that for each hereditary class $\C$ of graphs, there exists a set $X$ of graphs such that $\C=\Free(X)$, simply by taking $X$ as $\ind$-minimal graphs not in $\C$.
Note that this set $X$ is not necessarily finite (for example, consider the class of forests, whose minimal forbidden set contains all cycles on three or more vertices).

 A \emph{homogeneous} set (also known in the literature as clans~\cite{EHR1997}, intervals~\cite{Ille1997, schmerl:critically-inde:}, or modules~\cite{Giakoumakis1997,Spinrad1992}) of a graph $G$ is a set $X$ of vertices such that $2\le \abs{X}<\abs{V(G)}$ and each vertex in $V(G)-X$ is either complete or anti-complete to $X$. A graph is \emph{prime}\footnote{Other terms that have been used for `prime' include indecomposable, irreducible, and primitive.} if it has no homogeneous set. 

For positive integers $n$, 
let $P_n$ be a path on $n$ vertices and 
let $K_{1,n}$ be a complete bipartite graph on $n+1$ vertices where one part consists of one vertex. 
In $P_4$-free graphs, also known as cographs~\cite{CLB1981}, it is
well known that they have no prime graphs on three or more vertices.
However,
in  $K_{1,3}$-free graphs, commonly known as claw-free graphs, 
we can easily find infinitely many prime graphs, such as all cycle
graphs on at least $5$ vertices.
Thus we may ask the following question: \emph{given a finite set $L$ of
graphs, can we decide whether there are infinitely many
non-ismorphic $L$-free prime graphs?}
We answer this question positively as follows.

\begin{theorem}\label{thm:main}
For a given finite set $L$ of graphs, 
there exists an algorithm to decide whether $\Free(L)$ contains infinitely many non-isomorphic prime graphs.
\end{theorem}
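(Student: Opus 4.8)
The plan is to reduce the infiniteness question to a membership test of $L$ against a fixed, finite list of ``pumpable'' families of prime graphs. The first, easy observation is that since there are only finitely many graphs on at most $m$ vertices, $\Free(L)$ contains infinitely many non-isomorphic prime graphs if and only if it contains prime graphs on arbitrarily many vertices. So it suffices to decide whether the sizes of the prime graphs in $\Free(L)$ are bounded, and to make this decidable I would isolate a small collection of canonical infinite chains of prime graphs that must appear inside every sufficiently large one.

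The technical heart, and what I expect to be the main obstacle, is a Ramsey-type structural theorem for prime graphs: there is a finite list $B^{(1)},\dots,B^{(t)}$ of families, each of the form $B^{(i)}=(B^{(i)}_4\ind B^{(i)}_5\ind\cdots)$, an increasing chain of pairwise non-isomorphic prime graphs, together with a computable function $f$ such that every prime graph on more than $f(n)$ vertices contains some $B^{(i)}_n$ as an induced subgraph. To prove this I would fix an ordering $v_1,\dots,v_N$ of the vertices obtained from the recursive structure of indecomposable graphs---at each step peeling off one or two vertices whose removal keeps the graph prime---and then apply Ramsey's theorem to a colouring of the pairs $\{v_a,v_b\}$ that records not merely adjacency but the local pattern of how each newly added vertex attaches to the part already built. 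A monochromatic set under this refined colouring should be forced into one of finitely many linearly described shapes; after a case analysis these should reduce, up to complementation and reversal (and possibly a bounded number of further explicit families), to long induced paths $P_n$ and to half-graphs (vertices $a_1,\dots,a_n,b_1,\dots,b_n$ with $a_i$ adjacent to $b_j$ exactly when $i\le j$), each of which is prime and nested in its successor. The delicate points will be choosing the colouring so that monochromatic sets are genuinely prime---ordinary cliques and independent sets, which Ramsey returns first, are never prime and must be excluded by the refinement---and verifying that the short list is exhaustive.

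Granting the structural theorem, the decision procedure and its correctness are straightforward. For the backward direction, if $\Free(L)$ contains every member of some chain $B^{(i)}$, then it already contains infinitely many non-isomorphic prime graphs. For the forward direction, suppose $\Free(L)$ has primes of unbounded size. For each $n$, a prime graph on more than $f(n)$ vertices in $\Free(L)$ contains some $B^{(i)}_n$, which lies in $\Free(L)$ since the class is hereditary; as there are finitely many indices $i$ but arbitrarily large $n$, one fixed family $B^{(i^*)}$ has members of unbounded size in $\Free(L)$. Because the $B^{(i^*)}_j$ are nested, the set of $j$ with $B^{(i^*)}_j\in\Free(L)$ is downward closed, so unboundedness forces $B^{(i^*)}_j\in\Free(L)$ for every $j$. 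Hence $\Free(L)$ contains infinitely many prime graphs if and only if, for some $i$, no $H\in L$ is an induced subgraph of any $B^{(i)}_j$.

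It remains to note that this final condition is decidable. Each family on the list is explicitly describable, so the set of graphs embeddable into some member is a concrete hereditary property: for the path family it is exactly the linear forests, for the half-graph family the bipartite chain graphs, and dually for the complementary families. For each fixed $H\in L$ one can therefore test in finite time whether $H$ embeds into some member of $B^{(i)}$, and the algorithm reports ``infinitely many'' precisely when some family survives all of $L$. Equivalently, the structural theorem packages into a single computable bound $M(L)$ with the property that $\Free(L)$ has infinitely many primes if and only if it has a prime graph on more than $M(L)$ vertices, after which a brute-force search over all graphs up to $M(L)$ vertices suffices; the explicit families are what make both the correctness argument and the avoidance test transparent.
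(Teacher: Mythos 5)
Your high-level architecture---reduce infiniteness to an unavoidable-substructure theorem for large prime graphs, then test which unavoidable families survive the forbidden set $L$---is indeed the architecture of the paper, and your correctness argument granting the structural theorem is fine. But there is a genuine gap, and it sits exactly where you placed the phrase ``granting the structural theorem.'' The structural theorem you formulate---finitely many families $B^{(i)}$, \emph{each a single increasing chain} $B^{(i)}_4 \ind B^{(i)}_5 \ind \cdots$ of prime graphs, such that every large prime graph contains a large member of one---is strictly stronger than the true theorem (Chudnovsky--Kim--Oum--Seymour, Theorem~\ref{thm:CKOS}), and the stronger version is false. The genuine unavoidable list must include case (vii): prime graphs induced by \emph{chains} of length $n$ (sequences $v_0,\dots,v_n$ where each $v_i$ is adjacent to all of $v_0,\dots,v_{i-2}$ and not $v_{i-1}$, or vice versa). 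Chains of length $n$ correspond bijectively to binary strings of length $n$, so there are exponentially many at each level, and they cannot be organized into finitely many nested families: for instance, the chains with string $0^k$ are paths, those with string $1^k$ are complements of paths, and the hereditary closures of these two families share only graphs on at most $5$ vertices; infinitely many pairwise incompatible periodic patterns of this kind exist, while you have only finitely many indices $i$. So no finite list of linear families is unavoidable, and your claim that the case analysis reduces to paths and half-graphs (plus boundedly many further explicit families) cannot be repaired.

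The practical consequence is that your decision procedure is missing its hardest component. For the rigid families (subdivided stars, line graphs of $K_{2,n}$, thin spiders, half-graphs, $H'_{n,I}$, $H^*_n$---most of which your list omits, though they could be absorbed into ``further explicit families''), the test ``does some $H\in L$ embed into some member'' works, and this is the paper's Lemma~\ref{lem:case}. But for chains, ``does $\Free(L)$ contain arbitrarily long chains?'' is itself a nontrivial decidability question over an exponentially branching family, and it is where the paper does its real work (Section~\ref{sec:strings}): chains are encoded as $(0,1)$-strings, a pigeonhole argument (Lemma~\ref{lem:lem3}) shows that if a sufficiently long chain avoids $L$ then some string $S$ of length at most $2^n$ has $\phi^{-1}(S^k)\in\Free(L)$ for all $k$, and a pumping criterion (Lemma~\ref{lem:lem2}) shows that verifying $S^{2n-1}$ suffices. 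Your proposal has no counterpart to this step; additionally, the Ramsey sketch you give for the structural theorem (which is the content of an entire separate paper even in its correct, weaker form) would at best produce the regular families and cannot produce the chain case, since Ramsey-type monochromatic structure is exactly what chains lack.
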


Prime graphs form the `building blocks' of all other graphs 
by means of the \emph{modular decomposition} (See~{\cite[Theorem 1.5.1]{survey}}).
The modular decomposition first appeared in the abstract of a talk by Fra\"iss\'e~\cite{fraisse:on-a-decomposit:} in 1953, although its first appearance in an article seems to be Gallai~\cite{gallai:transitiv-orien:}. It has since appeared in a number of contexts, ranging from game theory to combinatorial optimization. 

Theorem~\ref{thm:main} decides a property that has an interesting
consequences. if a hereditary class $\C=\Free(L)$ of graphs has only finitely many non-isomorphic prime graphs, then the class has a number of desirable properties. 
For example, $\C$ is well-quasi-ordered by the induced subgraph relation~{\cite[Theorem 6]{KL2011}}
(in other words, $\C$ contains no infinite set of graphs no one of which is an induced subgraph of any other), 
and every graph in $\C$ has bounded \emph{clique-width}~\cite{CO2000}, which itself gives rise to a number of desirable algorithmic properties, via the meta-theorem of Courcelle, Makowsky, and Rotics~\cite{courcelle:linear-time-sol:}.

Brignall, Ru{\v{s}}kuc, and Vatter~\cite{BRV2008} studied an analogous problem for permutations, under the `containment' ordering. In the theory of permutations, simple permutations correspond to prime graphs in our context. They proved that there exists an algorithm to determine whether a given hereditary class of permutations described by finitely many forbidden permutations admits infinitely many simple permutations. To prove the existence of a decision algorithm, they utilise a theorem on unavoidable subpermutations in large simple permutations by Brignall, Huczynska, and Vatter~\cite{brignall:decomposing-sim:}. 

For us, it is also necessary to understand unavoidable induced subgraphs in large prime graphs. Recently Chudnovsky, Kim, Oum, and Seymour~\cite{CKOS2015} proved such a theorem, which states that every sufficiently large prime graph contains one of a few large prime graphs as an induced subgraph. We will review this theorem in detail in Theorem~\ref{thm:CKOS}.  Our algorithm will check whether all these unavoidable induced subgraphs are forbidden by the given set $L$ of forbidden graphs. 
If all of them are forbidden, then $\Free(L)$ contains only finitely many non-isomorphic prime graphs and so the algorithm terminates with the answer NO. If at least one of them is not forbidden, then we prove that $\Free(L)$ contains arbitrarily large prime graphs and so the algorithm terminates with the answer YES.  

One outcome of Theorem~\ref{thm:CKOS} dominates the work to prove Theorem~\ref{thm:main}, namely the case of `chains' of length $n$, and this is covered in Section~\ref{sec:strings}. In theory, to handle this case one could employ automata-theoretic arguments analogous to those used in~\cite{BRV2008} to handle `pin sequences', the direct analogue of chains for permutations. Instead, we will present a purely combinatorial argument, using a few applications of the pigeonhole principle, to show that if a class $\Free(L)$ contains arbitrarily long chains, then it must contain arbitrarily long chains with a periodic construction, whose period is bounded by a function of the largest graph in $L$.

The remaining cases from Theorem~\ref{thm:CKOS} and hence the proof of Theorem~\ref{thm:main} are covered in Section~\ref{sec:main}.

\section{Unavoidable induced subgraphs in large prime graphs}\label{sec:unavoidable}

Chudnovsky, Kim, Oum, and Seymour~\cite{CKOS2015} proved that 
every sufficiently large prime graph contains one of a few large prime graphs as an induced subgraph.
After a couple of preliminary concepts, we introduce definitions of those large prime graphs and the result in this section. 

The \emph{$1$-subdivision} of a graph $G$ is the graph $H$ obtained from $G$ by subdividing every edge once. 
The \emph{line graph} of a graph $G$ is the graph $H$ whose vertex set is $V(H)=E(G)$ and two vertices $e_1$, $e_2$ are adjacent in $H$ if two edges $e_1$, $e_2$ share an end in $G$. We are particularly interested in the 1-subdivision of $K_{1,n}$, and the line graph of $K_{2,n}$, both of which are prime for all $n\geq 3$, and illustrated in Figure~\ref{fig:unavoidables}(i) and (ii), respectively.

The \emph{thin spider with $n$ legs} is the graph $H$ with vertex set $V(H)=\{v_1,v_2,\ldots,v_n\}\cup\{u_1,u_2,\ldots,u_n\}$
and edge set $E(H)=\{v_iu_i:1\le i \le n\}\cup\{u_iu_j: 1\le i < j \le n\}$.
The \emph{half-graph of height $n$} is the graph $H$ with vertex set $V(H)=\{v_1,v_2,\ldots,v_n\}\cup\{u_1,u_2,\ldots,u_n\}$
and edge set $E(H)=\{v_iu_j: 1\le i \le j \le n\}$.
The graph $H'_{n,I}$ has vertex set $V(H'_{n,I})=\{v_1,v_2,\ldots,v_n\}\cup\{u_1,u_2,\ldots,u_n\}\cup\{w\}$
and edge set $E(H'_{n,I})=\{wv_i:1\le i \le n\}\cup\{v_iu_j: 1\le i \le j \le n\}\cup\{u_iu_j: 1\le i < j \le n\}$.
Finally, the graph $H^*_n$ has vertex set $V(H^*_n)=\{v_1,v_2,\ldots,v_n\}\cup\{u_1,u_2,\ldots,u_n\}\cup\{w\}$
and edge set $E(H^*_n)=\{wv_1\}\cup\{v_iu_j: 1\le i \le j \le n\}\cup\{u_iu_j: 1\le i < j \le n\}$.
Examples of these graphs are illustrated in Figure~\ref{fig:unavoidables}(iii)--(vi), and it is easy to see that these graphs are prime.\label{graphs-all-prime}

A \emph{chain} $C$ of length $n$ is a sequence $v_0,v_1,\ldots,v_n$ of distinct vertices such that for each $i\in\{2,\ldots,n\}$, $v_i$ is adjacent to all $v_0,v_1,\ldots,v_{i-2}$ but not $v_{i-1}$, or 
non-adjacent to all $v_0,v_1,\ldots,v_{i-2}$ but adjacent to $v_{i-1}$.
We call $v_0$ the \emph{first vertex} of the chain. 
The graph induced by a chain of length $n$ is prime, or is prime after discarding one of the vertices $v_0$ or $v_1$, as shown by the following result.

\begin{proposition}[{\cite[Corollary 2.3]{CKOS2015}}]\label{prop:chain}
Every chain of length $n>3$ contains a chain of
length $n-1$ inducing a prime graph.\end{proposition}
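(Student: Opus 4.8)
The plan is to determine exactly which vertex sets can be homogeneous in the graph induced by a chain, and then to apply this to the two natural subchains obtained by deleting $v_0$ or $v_1$. Throughout, for $2\le i\le n$ write $\epsilon_i=1$ if $v_i$ is adjacent to each of $v_0,\dots,v_{i-2}$ (and hence nonadjacent to $v_{i-1}$), and $\epsilon_i=0$ in the other case; this records the ``type'' of each step of the chain.

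First I would record the basic \emph{distinguishing} observation: for $i\ge 2$ the vertex $v_i$ is adjacent to exactly one of $v_{i-1}$ and $v_j$ for every $j<i-1$. Consequently, whenever $a<b<n$, the vertex $v_{b+1}$ is adjacent to exactly one of $v_a$ and $v_b$. Hence if a homogeneous set $X$ contains two vertices $v_a,v_b$ with $a<b<n$, it must also contain $v_{b+1}$, since otherwise $v_{b+1}$ would lie outside $X$ while having both a neighbour and a non-neighbour inside $X$. Iterating this, any homogeneous set contains $v_n$ and its index set necessarily has the form $\{a\}\cup\{b,b+1,\dots,n\}$ with $a<b$.

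Next I would pin these sets down completely by a short case analysis on $a$ and on whether there is a gap (that is, $b>a+1$). For an external vertex $v_c$ one can read off from the types $\epsilon_i$ exactly when $v_c$ is adjacent to each $v_i\in X$; demanding that every external vertex be complete or anti-complete to $X$ forces $\epsilon_b,\dots,\epsilon_n$ to be constant and then yields a contradiction unless $a\in\{0,1\}$ and $b=2$. The expected conclusion is the clean characterization that, for any chain $w_0,\dots,w_m$ with $m\ge 2$, the only possible homogeneous sets are $\{w_1,\dots,w_m\}$ and $\{w_0,w_2,\dots,w_m\}$, homogeneous precisely when $w_0$, respectively $w_1$, is complete or anti-complete to the remaining vertices. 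I expect this step to be the main obstacle: the gap cases require checking that two different external vertices (one just below the gap and one far below it) impose incompatible parity constraints on the types, and this is where the bookkeeping is heaviest.

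Finally I would apply the characterization to the two chains of length $n-1$ obtained from $C$, namely $C-v_0=(v_1,\dots,v_n)$ and $C-v_1=(v_0,v_2,\dots,v_n)$; a quick check confirms both are chains, and $n>3$ guarantees they are long enough ($m\ge 2$) for the characterization to apply and that $\epsilon_4,\dots,\epsilon_n$ is nonempty. By the characterization, $G[C-v_0]$ fails to be prime only if $v_1$ or $v_2$ is complete or anti-complete to the rest of $C-v_0$, and $G[C-v_1]$ fails only if $v_0$ or $v_2$ is complete or anti-complete to the rest of $C-v_1$. Translating each of these four events into conditions on $\epsilon_2,\epsilon_3$ and on the common value of $\epsilon_4,\dots,\epsilon_n$ (which must all be equal for any such event to occur), I expect the type-patterns making $G[C-v_0]$ imprimitive and those making $G[C-v_1]$ imprimitive to form two disjoint families. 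Hence at least one of $G[C-v_0]$ and $G[C-v_1]$ is prime, which furnishes the required chain of length $n-1$ inducing a prime graph.
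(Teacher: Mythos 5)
There is a preliminary point worth making: the paper does not actually prove this proposition---it is imported verbatim as \cite[Corollary~2.3]{CKOS2015}---so your attempt can only be judged on its own merits. Unfortunately it does not stand, because the ``clean characterization'' at its heart is false. Your forcing step (if $v_a,v_b\in X$ with $a<b<n$ then $v_{b+1}\in X$) is correct, but it is vacuous when $b=n$, that is, for two-element sets $X=\{v_a,v_n\}$; and your subsequent case analysis, which derives contradictions by playing the constant values $\epsilon_{b+1},\dots,\epsilon_n$ against the external vertex $v_{b-1}$, has nothing to work with in that case, since there are no $\epsilon_i$ with $i>b$. Such ``twin'' homogeneous sets genuinely occur. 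Concretely, take the chain $v_0,v_1,v_2,v_3$ with $v_0\sim v_1$, with $v_2$ adjacent only to $v_1$, and with $v_3$ adjacent to $v_0,v_1$ but not $v_2$ (the paw graph, $\phi^{-1}(001)$ in the paper's notation): the set $\{v_0,v_3\}$ is homogeneous, yet it is neither $\{v_1,v_2,v_3\}$ nor $\{v_0,v_2,v_3\}$. This is not a low-length fluke: for the chain $\phi^{-1}(0101)$ the set $\{v_0,v_4\}$ is homogeneous, and for every $m\ge 3$ the chain $\phi^{-1}(011\cdots 101)$ has $\{v_0,v_m\}$ homogeneous. So the true statement must admit a third family of homogeneous sets, namely pairs $\{v_a,v_m\}$ with $a\le m-2$.

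This gap breaks your final step, which assumes $G[C-v_0]$ can fail to be prime only when $v_1$ or $v_2$ is complete or anti-complete to the rest (and similarly for $G[C-v_1]$), and then checks that the two families of bad type-patterns are disjoint. With twin pairs in the picture there are failure modes your disjointness check never examines, and they do arise in the exact setting of the proposition: in the chain of length $5$ given by $\phi^{-1}(00101)$ (so $v_2\sim v_1$ only; $v_3\sim v_0,v_1$ but not $v_2$; $v_4\sim v_3$ only; $v_5\sim v_0,v_1,v_2,v_3$ but not $v_4$), the subchain $C-v_0=(v_1,\dots,v_5)$ has homogeneous set $\{v_1,v_5\}$ and hence is not prime, even though neither $v_1$ nor $v_2$ is complete or anti-complete to the remaining vertices; your criterion would wrongly certify $C-v_0$ as prime. (The proposition survives in this example because $C-v_1$ happens to be prime, but your argument does not show that.) To repair the proof along your lines you would need to add the twin pairs to the characterization and redo the concluding case analysis so that it also rules out the possibility that both $C-v_0$ and $C-v_1$ fail via twins, which is precisely the bookkeeping your proposal skips.
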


Note that, in a slight departure from~\cite{CKOS2015}, we will not necessarily require that a chain is contained inside some specified graph. Instead, chains can be considered as sequences of vertices, which may or may not be embedded inside some larger graph, depending on the context. Additionally, we may from time to time abuse notation by referring to the chain when we mean the graph induced by a chain.

We are now ready to state the main result of~\cite{CKOS2015}, which provides the structural basis for our algorithm.

\begin{theorem}[Chudnovsky, Kim, Oum and Seymour~\cite{CKOS2015}]\label{thm:CKOS}
For every integer $n\ge 3$, there exists $N$ such that 
every prime graph with at least $N$ vertices contains one of the following graphs as an induced subgraph.
\begin{enumerate}[(i)]
\item The $1$-subdivision of $K_{1,n}$ or its complement.
\item The line graph of $K_{2,n}$ or its complement.
\item The thin spider with $n$ legs or its complement.
\item The half-graph of height $n$.
\item The graph $H'_{n,I}$.
\item The graph $H^*_n$ or its complement.
\item A prime graph induced by a chain of length $n$.
\end{enumerate}
\end{theorem}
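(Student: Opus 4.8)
This is a Ramsey-type unavoidability statement, so the plan is to extract from an arbitrarily large prime graph a long, highly regular substructure and then to argue that the absence of homogeneous sets forces this substructure to contain one of the seven listed graphs. The organizing object is the chain of outcome (vii), since a chain records a maximal ``distinguishing sequence'' and the remaining six outcomes should appear as the rigid configurations that arise when a long chain, or a closely related ordered ``staircase'', sits inside the graph in a structured way. I would therefore first prove a lemma asserting that a sufficiently large prime graph contains a chain of length $n$, and then classify how such a chain attaches to the rest of the graph.

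The mechanism for producing chains is that a prime graph has no homogeneous set, so every vertex set $X$ with $2\le\abs{X}<\abs{V(G)}$ is split by some external vertex, and in particular every pair of vertices is distinguished by a third. This only yields a chain step \emph{locally}, whereas the chain condition couples each new vertex to the entire prefix: its back-neighborhood among the earlier vertices must be either all-but-the-last or only-the-last. I would grow the chain by a Ramsey-type refinement that keeps a large reservoir of candidate continuations alive, splitting it at each round according to adjacency to the most recently added vertex and recursing into the larger part, with primality used to prevent the reservoir from collapsing into a homogeneous set. Making this refinement respect the \emph{full} chain condition rather than a single distinguishing step is already delicate, and the number of rounds needed is what forces $N$ to grow like an iterated, tower-type Ramsey function of $n$.

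Given a long chain $C=v_0,\dots,v_m$ with $m$ far larger than $n$, I would regularize the remaining data by further applications of Ramsey's theorem: the binary type pattern of $C$ (dominating versus anti-dominating at each step), the adjacencies among the vertices distinguishing consecutive pairs, and the adjacencies between $C$ and any vertex that is complete or almost-complete to a long segment of $C$. Each surviving regular pattern should match one entry of the list. A monotone staircase of attachments yields the half-graph, outcome (iv); attachments touching the chain at a single vertex yield the $1$-subdivision of $K_{1,n}$, and dually its complement, outcome (i); attachments touching in pairs yield the line graph of $K_{2,n}$ or its complement, outcome (ii); and the presence of an apex vertex $w$ together with a half-graph-plus-clique body produces the thin spider (iii), the graph $H'_{n,I}$ (v), or the graph $H^*_n$ (vi), the three being separated by exactly how $w$ attaches. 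When no external structure is forced, the chain itself, trimmed by the cited Proposition so that it induces a prime graph, is outcome (vii); taking complements throughout accounts for the ``or its complement'' alternatives.

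The main obstacle is the breadth and precision of this final classification rather than any single estimate. The work is to enumerate all adjacency patterns that can survive the Ramsey reductions, to verify that each surviving pattern contains its claimed graph as an \emph{induced} subgraph and not merely a spanning one, and to control the boundary behaviour of the apex vertex $w$ that distinguishes the three closely related outcomes (iii), (v), and (vi). Keeping the several reductions mutually compatible, so that one pass regularizes the type pattern, the internal chain adjacencies, and the external attachments simultaneously, is the bookkeeping bottleneck, and it is this nesting of Ramsey applications that ultimately determines the bound on $N$.
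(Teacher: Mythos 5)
First, a point of comparison: the paper you are working from does not prove Theorem~\ref{thm:CKOS} at all --- the result is imported verbatim from Chudnovsky, Kim, Oum and Seymour~\cite{CKOS2015} and used as a black box --- so your proposal must be measured against the proof in that reference. Measured that way, it fails at its very first step. Your organizing lemma, that every sufficiently large prime graph contains a chain of length $n$, is false, and the half-graphs of outcome (iv) already refute it: the half-graph of height $n$ is prime and arbitrarily large, yet the chains it contains have length bounded by an absolute constant (around $10$), independent of $n$. To see this, note that a half-graph is bipartite, hence triangle-free; but in the string encoding $\phi$ of Section~\ref{sec:strings}, a chain whose string has $1$'s at two positions $i<j$ with $i\ge 2$ and $j\ge i+2$, or a $1$ at some position $\ge 3$ together with a $0$ at position $1$, induces a triangle. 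So any long triangle-free chain must contain a long run of $0$'s, and a run of $m$ consecutive $0$'s induces a path on $m+1$ vertices --- while the half-graph contains no induced $P_5$. The same bounded-chain phenomenon holds for each of the families (i)--(vi), and this is precisely why the theorem has seven outcomes: if your lemma were true, outcomes (i)--(vi) would be redundant and the theorem would simply assert the existence of a long chain, which the half-graph shows is impossible.

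Consequently the whole architecture of your proposal --- extract a long chain, then classify how it attaches to the rest of the graph so as to generate the other six outcomes --- has the logic of the theorem backwards, and cannot be repaired locally. The proof in~\cite{CKOS2015} branches the opposite way. If $G$ contains a chain of length $n$, then discarding at most one vertex (the proposition quoted in Section~\ref{sec:unavoidable}, Corollary~2.3 of~\cite{CKOS2015}) gives outcome (vii) and the proof ends there; no analysis of attachments is needed or performed. The substantial work lies in the complementary case, where $G$ contains \emph{no} chain of length $n$: there, primality (every pair of vertices is distinguished by some third vertex) combined with the absence of long chains --- which is what prevents distinguishing vertices from nesting indefinitely --- is used to produce two large vertex sets joined by an induced matching, an anti-matching, or a half-graph pattern; Ramsey arguments then reduce each side to a clique or a stable set, and a case analysis of the surviving patterns (together with an apex vertex where one is forced) yields exactly outcomes (i)--(vi) and their complements. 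In other words, chains enter the correct proof as the obstruction whose \emph{absence} creates the remaining structures, not as scaffolding on which those structures are found. Your second- and third-paragraph Ramsey-cleaning ideas do resemble steps that genuinely occur in~\cite{CKOS2015}, but they are applied there to the matching/half-graph pair obtained in the chain-free case, not to a chain.
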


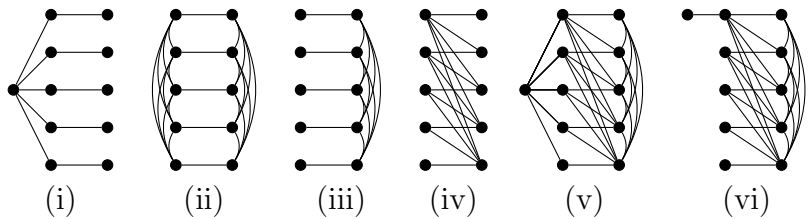
\begin{figure}
\centering
\begin{tabular}{cccccc}
\begin{tikzpicture}[scale=0.5]
\node (core) at (0,3) {};
\foreach \j in {5,...,1}
{
	\node (\j1) at (1,\j) {};
	\node (\j2) at (2.5,\j) {};
	\draw (core) -- (\j1) -- (\j2);
}
\end{tikzpicture}
&
\begin{tikzpicture}[scale=0.5]
\foreach \j in {1,...,5}
{
	\node (\j1) at (1,\j) {};
	\node (\j2) at (2.5,\j) {};
	\draw (\j1) -- (\j2);
}
\foreach \j/\k in {1/2,1/3,1/4,1/5,2/3,2/4,2/5,3/4,3/5,4/5}
{
	\draw[bend left=30] (\k2) to (\j2);
	\draw[bend right=30] (\k1) to (\j1);
}
\end{tikzpicture}
&
\begin{tikzpicture}[scale=0.5]
\foreach \j in {1,...,5}
{
	\node (\j1) at (1,\j) {};
	\node (\j2) at (2.5,\j) {};
	\draw (\j1) -- (\j2);
}
\foreach \j/\k in {1/2,1/3,1/4,1/5,2/3,2/4,2/5,3/4,3/5,4/5}
	\draw[bend left=30] (\k2) to (\j2);
\end{tikzpicture}
&
\begin{tikzpicture}[scale=0.5]
\foreach \j in {5,...,1}
{
	\node (\j1) at (1,\j) {};
	\node (\j2) at (2.5,\j) {};
	\foreach \k in {\j,...,5}
		\draw (\k1) -- (\j2);
}
\end{tikzpicture}
&
\begin{tikzpicture}[scale=0.5]
\node (core) at (0,3) {};
\foreach \j in {5,...,1}
{
	\node (\j1) at (1,\j) {};
	\node (\j2) at (2.5,\j) {};
	\foreach \k in {\j,...,5}
		\draw (core) -- (\k1) -- (\j2);
}
\foreach \j/\k in {1/2,1/3,1/4,1/5,2/3,2/4,2/5,3/4,3/5,4/5}
	\draw[bend left=30] (\k2) to (\j2);
\end{tikzpicture}
&
\begin{tikzpicture}[scale=0.5]
\node (core) at (0,5) {};
\foreach \j in {5,...,1}
{
	\node (\j1) at (1,\j) {};
	\node (\j2) at (2.5,\j) {};
	\foreach \k in {\j,...,5}
		\draw (\k1) -- (\j2);
}
\draw (core) -- (51);
\foreach \j/\k in {1/2,1/3,1/4,1/5,2/3,2/4,2/5,3/4,3/5,4/5}
	\draw[bend left=30] (\k2) to (\j2);
\end{tikzpicture}\\
(i)&(ii)&(iii)&(iv)&(v)&(vi)
\end{tabular}
\caption{Examples of the unavoidable graphs of cases (i)--(vi) in Theorem~\ref{thm:CKOS}.}
\label{fig:unavoidables}
\end{figure}

Note that in the characterization of Theorem~\ref{thm:CKOS}, the complements of the half-graphs (case (iv)) and $H'_{n,I}$ (case (v)) both contain (as induced subgraphs) graphs of the same type, with two vertices removed. 
Since the graphs in cases (i)--(vi) of Theorem~\ref{thm:CKOS} admit regular structures,
it is relatively straightforward to check whether a class $\Free(L)$ contains arbitrarily large ones.
The details are provided in Section~\ref{sec:main}.

\section{Chains and strings}\label{sec:strings}

In this section, we consider the chains that arise in case (vii) of Theorem~\ref{thm:CKOS}. Note that the complement of a chain is again a chain. 

For convenience, we seek to describe an encoding of chains as strings over the alphabet $\{0,1\}$. First, we introduce some elementary concepts about strings.

A \emph{$(0,1)$-string} (or simply a \emph{string}) is an element of $\{0,1\}^*$, 
where $\{0,1\}^*$ is the set of all finite sequences of $0$ and $1$. 
The \emph{length} of a string $S$ is the number of $0$'s and $1$'s in the string and is denoted by $\abs{S}$.
Given strings $S$ and $T$, we denote the \emph{concatenation} (defined in the natural way) by $ST$. For example, if $S=011$ and $T=101$, then $ST=011101$. 
Let $S^t$ denote the concatenation of $t$ copies of a string $S$. For example, $S^3=SSS$. 

We say that $T$ is a \emph{factor} of $S$, or $S$ \emph{contains} $T$ as a factor, if there exist strings $X$ and $Y$ such that $S=XTY$. An \emph{occurrence} of $T$ in $S$ is a pair $(T,i)$ such that $S=XTY$ and $\abs{X}=i-1$ 
(that is, $T$ is a factor of $S$ that starts at the $i$-th letter). Furthermore, we say that the occurrences $(S_1,i_1)$ and $(S_2,i_2)$ of two (possibly equal) factors inside some string $S$ with $i_1\leq i_2$ are \emph{1-disjoint} if $i_1+\abs{S_1} < i_2$ (in other words, there is at least one letter of $S$ that is not used in either of the occurrences, but lies `between' $S_1$ and $S_2$), and they \emph{intersect} if $i_1+\abs{S_1} > i_2$.

We are now ready for the basic encoding of chains into strings, which we will denote by $\phi$. 
For a chain $C=v_0,v_1,\ldots,v_k$ of length $k$, let $\phi(C)=s_1s_2\cdots s_k$ where $s_i=0$ if $v_i$ is adjacent to $v_{i-1}$, and $s_i=1$ otherwise for each $i\in\{1,\ldots, k\}$. 
Note that $\phi$ is a bijection between chains and strings, but recall that the graphs induced by two distinct chains $C_1$ and $C_2$ can be isomorphic and so a graph that is induced by some chain does not necessarily have a unique representation as a string. Note also that if $C$ contains $k+1$ vertices, then $\phi(C)$ contains $k$ letters, because the first vertex is not assigned a letter. See Figure~\ref{fig:chains}.
\newcommand{\chainstring}[2][]{
	\def\binseq{#2}
	\node (0) at (0,0) {};
	\foreach \j [count=\i, remember=\j as \jj (initially 0)] in \binseq {
		\ifthenelse{\j=2}{%
			\node[fill=none,draw=none] at (\i,-1) {$#1$};
		}{%
			\node[fill=none,draw=none] at (\i,-1) {\j};
			\node (\i) at (\i,0) {};
			\pgfmathtruncatemacro{\im}{\i-1};
			\ifthenelse{\j=0}{%
				\ifthenelse{\jj=2}{}{\draw (\i) -- ({\i-1},0);}
			}{%
				\ifthenelse{\i>1}{\draw[bend left=80] (0) to (\i);}{}
				\foreach \x [count=\xx] in \binseq {
					\ifthenelse{\xx<\im \AND \x<2}{%
						\draw[bend left=80] (\xx) to (\i);}{}%
				}
			}
		}
	}
}
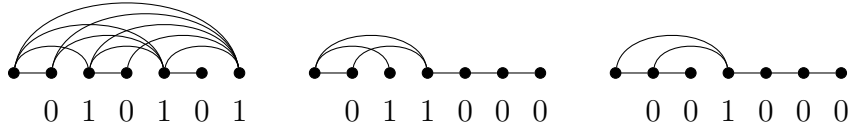
\begin{figure}
\centering
\begin{tikzpicture}[scale=0.5]
\chainstring{0,1,0,1,0,1}
\begin{scope}[shift={(8,0)}]
	\chainstring{0,1,1,0,0,0}
\end{scope}
\begin{scope}[shift={(16,0)}]
	\chainstring{0,0,1,0,0,0}
\end{scope}
\end{tikzpicture}
\caption{Examples of chains, and their encodings as strings via the bijection $\phi$. Note the two examples on the right give rise to graphs that are isomorphic.}
\label{fig:chains}
\end{figure}

We say that a graph $G$ is \emph{induced by} a string $S$ if $G$ is induced by $C = \phi^{-1}(S)$. Similarly, we say that a string $S$ \emph{contains} a graph $G$ if the graph induced by $S$ contains $G$ as an induced subgraph. 

In addition to encoding chains into strings, we also need to be able to encode subgraphs of strings, in order to identify when a given string contains graphs from the minimal forbidden set $L$. To this end, suppose that $G$ is a graph on $n$ vertices that embeds inside some string $S$. If $\phi^{-1}(S)=v_0,v_1,\ldots,v_k$, then $G$ is isomorphic to the graph induced on the subsequence $v_{i_1},v_{i_2},\ldots, v_{i_{n}}$ that corresponds to the chosen embedding, where $0\leq i_1 < i_2 <\cdots <i_{n} \leq k$. We now define a new encoding $\psi$ from subsequences of chains (or embeddings of graphs into chains) into strings over the three-letter alphabet $\{0,1,\mid\,\}$.

For each $j$ ranging from 2 to $n$, the encoding $\psi$ writes symbols according to the following rules: if $i_j = i_{j-1}+1$, then write $0$ if $v_{i_j}$ is adjacent to $v_{i_{j-1}}$, and 1 otherwise. When $i_j > i_{j-1}+1$, write $\mid 0$ if $v_{i_j}$ is \emph{not} adjacent to $v_{i_{j-1}}$ (and all earlier vertices), and $\mid 1$ otherwise. 
If $G$ is isomorphic to the graph induced on the subsequence $M$ of some chain,
then we call $\psi(M)$ a \emph{representation} of $G$. 
A \emph{block} of a representation is a maximal factor that contains only the letters 0 and 1. See Figure~\ref{fig:subchains}. Note that if a representation begins with the symbol $\mid$, then we will assume that there is an empty block preceding it.

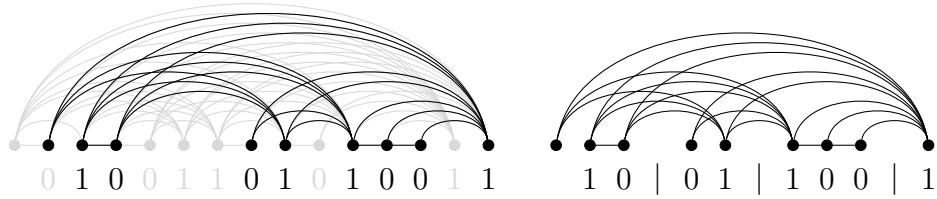
\begin{figure}
\centering
\begin{tikzpicture}[scale=0.45]
\begin{scope}[shift={(3,0)}]
\chainstring[\mid]{1,0,2,0,1,2,1,0,0,2,1}
\end{scope}
\begin{scope}[gray!30,every node/.append style={fill=gray!30}, shift={(-13,0)}]
\chainstring{0,1,0,0,1,1,0,1,0,1,0,0,1,1}
\end{scope}
\begin{scope}[shift={(-12,0)}]
\chainstring{1,0,2,2,2,0,1,2,1,0,0,2,1}
\end{scope}
\end{tikzpicture}
\caption{On the left, an embedding $M$ of a graph inside the chain with string $01001101010011$. On the right, the encoding of $M$ is the representation $\psi(M)=10\mid01\mid100\mid1$ with blocks 10, 01, 100 and 1.}
\label{fig:subchains}
\end{figure}

Before we go further, we need to make a few remarks about the strings created under the encoding $\psi$. Suppose that $M$ is an embedding (or subsequence) of a graph on $n$ vertices inside some chain.
\begin{itemize}
\item $\psi(M)$ has exactly $n-1$ symbols that are 0 or 1.
\item $\psi(M)$ cannot contain the factor $\mid\,\mid$, nor can it end with the symbol~$\mid$. Therefore, there are at most $n-1$ instances of the symbol $\mid$ in $\psi(M)$. 
\item $\psi(M)$ therefore contains at most $2n-2$ letters.
\item $\psi$ is not a bijection, because it does not remember the specific positions of vertices of $M$ in the chain.
\end{itemize} 

At this point we make an important observation: one can view the reverse process of $\psi$, from words over the alphabet $\{0,1,\mid\,\}$ to graphs, as a monadic second-order transduction, from which it is possible to conclude that the edge relation of subgraphs of chains is definable by a monadic second-order formula. This gives rise to a decision procedure for whether $\Free(L)$ contains arbitrarily long or not via the Backwards Translation Theorem (see \cite[Theorem 7.10]{courcelle:graph-structure:}), as the language over $\{0,1,\mid\,\}$ corresponding to $\Free(L)$ is regular. This approach is essentially the same as the one given in the case of permutations, see~\cite{BRV2008}, but it is not the approach we use here.

Instead, the decision procedure we present here comprises two parts and is elementary (in that it requires only the pigeonhole principle applied to the structures introduced in this section so far).
First, we establish that if there exists a chain of a specified (large) length in $\Free(L)$, 
then there exists arbitrarily long chains with a periodic structure, where the size of the period is bounded above by a function of the largest forbidden graph in $L$ (this may be compared to the `pumping lemma' in the study of regular languages). Note that by exhaustively checking membership in $\Free(L)$ of all chains of the specified large length, this result is already sufficient for a decision procedure. However, the second part of our procedure gives us a simpler method, namely a check for whether a particular chain sequence can be repeated arbitrarily often.

Now we consider the total number of possible representations of graphs on $n$ vertices.
Each representation is obtained from a $(0,1)$-string of length $n-1$
by inserting at most $n-1$ copies of the symbol $\mid$. 
There are $2^{n-1}$ $(0,1)$-strings of length $n-1$, and there are
$2^{n-1}$ choices of inserting the symbol $\mid$ or not at each position.
Thus,
we deduce the following observation.

\begin{observation}\label{obs:numrep}
For each positive integer $n$, 
there are at most $2^{2n-2}$ representations of graphs on $n$ vertices. %
Moreover, each such representation $R$ has at most $n$ blocks.
\end{observation}

Now consider a representation $R$ of a graph with $n$ vertices. Although we cannot recover the specific embedding of this graph in a chain that gave rise to the representation, we can reconstruct the graph from $R$ in the natural way: create vertices $v_1,v_2,\dots v_n$, where $v_2,\dots,v_{n}$ correspond to the \mbox{non-$\mid$} symbols in $R$, reading from left to right. For each $i$ ($2\leq i\leq n$), the adjacencies of $v_i$ to the previous $i-1$ vertices is determined by the letter of $R$ corresponding to $v_i$ (which is either 0 or 1), and the letter (if it exists) immediately preceding this one in $R$ (specifically, whether this symbol is $\mid$ or not). 

Given the above reconstruction process, each representation $R$ corresponds to a unique graph $G$. However, each graph $G$ can have several corresponding representations -- see Figure~\ref{fig:multireps} for an example. We let $\R_G$ denote the set of all representations that correspond to a given graph $G$. 
Note that %
$\abs{\R_G}\le 2^{2\abs{V(G)}-2}$ by Observation~\ref{obs:numrep}.

\begin{figure}
\centering
\begin{tikzpicture}[scale=0.45]
\begin{scope}
\chainstring[\mid]{0,2,1,1}
\end{scope}
\begin{scope}[shift={(6,0)}]
\chainstring[\mid]{2,1,2,1,1}
\end{scope}
\begin{scope}[shift={(13,0)}]
\chainstring[\mid]{1,2,1,2,1}
\end{scope}
\begin{scope}[shift={(20,0)}]
\chainstring[\mid]{2,0,2,1,2,1}
\end{scope}
\end{tikzpicture}
\caption{Four different representations of the same graph $K_4-e$.}
\label{fig:multireps}
\end{figure}
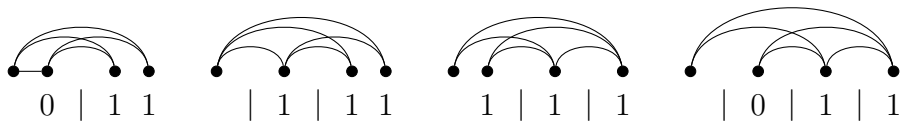

Our final preparatory task is to observe how a representation $R\in\R_G$ can be embedded in some given $(0,1)$-strings $S$. 
We say that the string $S$ \emph{contains} the representation $R$ if 
\begin{enumerate}[(1)]
\item each block of $R$ is embedded as a factor in $S$, and 
\item every pair of distinct blocks $B_i$ and $B_j$ are embedded as 1-disjoint factors, with the factor corresponding to $B_i$ preceding that of $B_j$ if and only if $B_i$ precedes $B_j$ in $R$.
\end{enumerate}

Now, we introduce two lemmas for the proof of Theorem~\ref{thm:main}. 

\begin{lemma}\label{lem:lem3}
Let $L$ be a set of graphs having at most $n$ vertices. 
If there exists a $(0,1)$-string $T$ of length at least $\lceil\frac{(n-1)4^n+1}{3}\rceil(2^{n-2}+n-1)$ containing no graphs in $L$, %
then there exists a $(0,1)$-string $S$ of length at most $2^n$ such that the $(0,1)$-string $S^k$ contains no graph in $L$ for all $k$. %
\end{lemma}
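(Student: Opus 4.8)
The plan is to reduce everything to a single left-to-right scanning process and to exploit the fact, implicit in the reconstruction discussion, that a substring of a string avoiding all graphs in $L$ again avoids all graphs in $L$. Recall that a string $W$ contains a graph $G$ exactly when $W$ contains some representation $R\in\R_G$, and that such containment only asks for the blocks of $R$ to be located as $1$-disjoint substrings of $W$, in order. The first thing I would record is this monotonicity: if $W$ is a substring of $W'$ and $W$ contains some $R\in\R_G$, then so does $W'$; hence every substring of a string containing no graph in $L$ also contains no graph in $L$.

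Next I would set up a deterministic scanning process. For each $G\in L$ and each representation $R\in\R_G$ (finitely many, by Observation~\ref{obs:numrep}), I track the greedy progress of matching $R$ into the prefix scanned so far: the number of blocks of $R$ already found as $1$-disjoint substrings, together with the bounded local information (essentially how much of the next block currently agrees with a suffix of the scanned prefix) needed to update this number when one further letter is read. Bundling these data over all such $R$ gives a \emph{state} that is a deterministic function of the scanned prefix, updated letter-by-letter; there are only finitely many states, and $T$ contains no graph in $L$ precisely when, during the scan of $T$, no representation of any $G\in L$ is ever completed.

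The heart of the argument is a pigeonhole step followed by a loop argument. Since $T$ is long and the number of states is bounded, two prefixes $T[1..a]$ and $T[1..b]$ with $a<b$ reach the same state; put $S=T[a+1..b]$. Reading $S$ from this common state returns to it, and by determinism every subsequent reading of $S$ reproduces exactly the run of states traversed during the original passage through $T[a+1..b]$ -- a passage in which, because $T$ contains no graph in $L$, no representation of any $G\in L$ is completed. Consequently the process never completes a forbidden representation while scanning $T[1..a]S^k$, for any $k$, so $T[1..a]S^k$ contains no graph in $L$. As $S^k$ is a substring of $T[1..a]S^k$, monotonicity yields that $S^k$ contains no graph in $L$ for every $k$, as required.

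Finally I would extract the quantitative bounds. A repeated state can always be found inside a prefix whose length is at most the number of states, so $\abs{S}$ is controlled by that count; using Observation~\ref{obs:numrep} (at most $2^{2n-2}$ representations, each with at most $n$ blocks whose $0$/$1$ lengths sum to $n-1$) to bound the relevant states, and choosing the state as coarsely as the argument allows, is what should deliver the period bound $\abs{S}\le 2^n$ and, correspondingly, the length threshold $((n-1)4^n/3+1)(2^{n-2}+n-1)$ on $T$ that forces a repeat. I expect this bookkeeping -- pinning down exactly how much local information must be retained to keep the scan deterministic while holding the state count within the stated figures -- to be the main technical obstacle. By contrast, the conceptual worry that repeating $S$ might create a new forbidden induced subgraph (the ordering and $1$-disjointness of blocks spread across many copies of $S$) is exactly what the determinism-of-the-loop together with the substring-monotonicity step dispatches.
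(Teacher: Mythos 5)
Your qualitative pumping idea is sound (and the paper itself notes that an automata-theoretic route \`a la~\cite{BRV2008} is possible in principle), but the lemma you are proving is quantitative, and your argument cannot deliver its bounds. The state you propose bundles, over \emph{all} representations $R\in\R=\cup_{G\in L}\R_G$, the number of blocks of $R$ already matched plus local matching progress. These per-representation counters are essentially independent, so the bundled state space has size of order $(\mathrm{poly}(n))^{\abs{\R}}$, and $\abs{\R}$ can be as large as $4^n/3$: the state count is \emph{doubly} exponential in $n$. This breaks your proof at both quantitative junctures. First, the pigeonhole step needs two prefixes of $T$ reaching the same state, but the hypothesis only supplies $\abs{T}\geq((n-1)4^n/3+1)(2^{n-2}+n-1)$, a singly exponential quantity, so a string of this length need not revisit any state at all. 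Second, even when a repeat exists, all your argument bounds is $\abs{S}=b-a$ by (at best) the number of states, which is astronomically larger than the required $\abs{S}\le 2^n$. You flag this as ``bookkeeping,'' but it is not: no choice of deterministic state within your scheme can have only $2^n$ states, because the monotone per-representation progress counters genuinely require (in the worst case) a product state space; the missing content is a different combinatorial idea, not a tighter encoding.

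The paper's proof supplies exactly that idea, in two pigeonhole steps that sidestep the product entirely. It cuts $T$ into $s=(n-1)4^n/3+1$ pairwise 1-disjoint windows $T_j$ of length $2^{n-2}+n-2$. If every window contained all blocks of some representation, then by pigeonhole $n$ windows would contain all blocks of one and the same representation $R^*$; since these windows are 1-disjoint and $R^*$ has at most $n$ blocks, one could place the $i$-th block of $R^*$ in the $i$-th such window, so $T$ would contain $R^*$ and hence a graph of $L$ --- contradiction. So some window $T_{j^*}$ misses at least one block of \emph{every} representation. A second pigeonhole inside $T_{j^*}$ (there are only $2^{n-2}$ binary strings of length $n-2$) produces the same length-$(n-2)$ substring $A$ at two positions, and $S$ is taken to be the intervening string, so $\abs{S}\le 2^{n-2}$ and $SA$ is a substring of $T_{j^*}$ with $SA$ a prefix of $S^k$ (or $S^k$ a substring of $SA$). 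The punchline uses that blocks are short: a block has at most $n-1$ letters, so any block embedded in $S^k$ can be translated to start in the first copy of $S$ and then lies inside the prefix $SA$, hence inside $T_{j^*}$; since every representation has a block missing from $T_{j^*}$, no representation embeds in $S^k$. Note that the window-plus-short-block argument is what replaces your doubly exponential state: if you wanted to salvage your framework, you would have to observe that the counters are monotone and change at most $n\abs{\R}$ times in total, find a long window where they are all frozen, and pigeonhole there on the last $n-2$ letters --- which is precisely the paper's argument in automaton clothing, and is the idea your proposal leaves unsupplied.
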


\begin{proof}
Let $\R=\cup_{G\in L} \R_G$ be the set of all representations of
graphs from $L$. Note that, by Observation~\ref{obs:numrep}, we have
$\abs{\R} \leq 
\sum_{k=1}^n 2^{2k-2}\le
4^n/3$. Furthermore, each representation in $\R$ has at most $n$ blocks. 

Let $s=\lceil\frac{(n-1)4^n+1}{3}\rceil$. 
We may assume that $\abs{T}=s(2^{n-2}+n-1)$. 
We can rewrite $T=T_1\ell_1T_2\ell_2\cdots T_s\ell_s$ where $|T_i|=2^{n-2}+n-2$ and $\ell_i = 0$ or $1$ for all $i$. Thus, the $T_i$ are pairwise 1-disjoint.
We claim that 
there exists $j^*$ such that for every representation $R\in\R$, at least one block of $R$ is not a factor of $T_{j^*}$. 
Suppose not.
Then for each $j\in\{1,2,\ldots, s\}$, there exists $R_j\in\R$ 
such that $T_j$ contains each block of the representation $R_j$ as a factor. 
Note that the blocks of $R_j$ in $T_j$ may overlap and may appear in the incorrect order. 
Since $s>(n-1)4^n/3$ and $\abs{\R}\leq 4^n/3$, by the pigeonhole principle, at least $n$ of the $T_j$ must contain all the blocks from one particular representation $R^*\in\R$ as a factor. That is, there exists a subsequence $j_1,j_2,\ldots,j_n$ of $1,2,\dots,s$ such that $R_{j_1} = R_{j_2} = \cdots = R_{j_n}=R^*$.
This means that by considering the factor of $T_{j_1}$ equal to the first block of $R^*$, the factor of $T_{j_2}$ equal to the second block, and so on, and recalling that the $T_{j_k}$ are pairwise 1-disjoint, we find that $T$ contains the representation $R^*$. Therefore $T$ contains some $G\in L$, a contradiction which proves the claim.

Now, $T_{j^*}$ does not contain at least one block of every representation $R\in\R$ as a factor. By the pigeonhole principle, since $\abs{T_{j^*}}=2^{n-2}+n-2$, there exist two (not necessarily disjoint) occurrences $(A,a_1)$ and $(A,a_2)$ in $T_{j^*}$ such that $\abs{A}=n-2$, and $a_1<a_2$. That is, we find the same factor of $n-2$ letters occurring at least twice in $T_{j^*}$. %

Now, consider the occurrence $(S,a_1)$ in $T_{j^*}$ where $S$ is a factor of $T_{j^*}$ of length $a_2-a_1$, 
in other words, $T_{j^*}=K_1SAK_2$ for some (possibly empty) prefix $K_1$ and suffix $K_2$ of $T_{j^*}$. 
Note that $\abs{S}\leq 2^{n-2}$ since $T_{j^*}$ has length $2^{n-2}+n-2$ and $\abs{A}=n-2$. 
We claim that $\phi^{-1}(S^k)\in\Free(L)$ for all $k$. 

Suppose to the contrary that there exists $k$ such that $S^k$ contains some representation $R\in\R$. 
By construction of $T_{j^*}$, there is some block $B$ of $R$ that is not contained in $T_{j^*}$ as a factor, and therefore $B$ is not contained in $SA$ or in $S$ as a factor. 
Moreover, by construction of $S$, we observe that either $S^k$ is a factor of $SA$, or $SA$ is a factor of $S^k$. 
See Figure~\ref{fig:sa}. 

\begin{figure}[h]
	\begin{center}
		\includegraphics[scale=0.8]{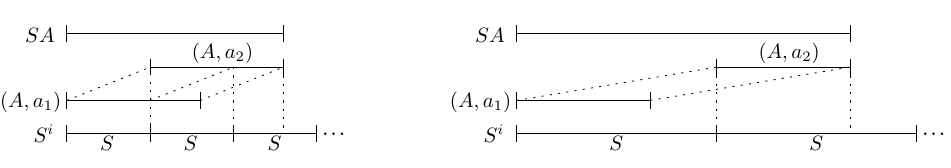}
	\end{center}
  \caption{Since both occurrences $(A,a_1)$ and $(A,a_2)$ represent the same factor, we can deduce that for each positive integer $i$, either $S^i$ is a factor of $SA$ or $SA$ is a factor of $S^i$.}
  \label{fig:sa}
\end{figure}

If $S^k$ is a factor of $SA$, then since the block $B$ is a factor of $S^k$, it is also a factor of $SA$, which is a contradiction. Therefore, $SA$ is a factor of $S^k$. We may assume that $B$ is embedded as a factor in $S^k$ starting from an entry in the first copy of $S$. Note that $A$ contains precisely $n-2$ letters, and $B$ contains at most $n-1$ letters. From this, we conclude that $B$ embeds into $SA$ (starting from an entry in the prefix $S$), another contradiction.

Thus we conclude that $S^k$ contains no representation $R\in \R$ for all $k$, which completes the proof.
\end{proof}

Lemma~\ref{lem:lem3} tells us that if a class $\Free(L)$ contains arbitrarily long chains then it contains arbitrarily long chains with a periodic construction, whose period is at most $2^n$. Our next lemma gives us the necessary practical condition for our decision procedure to test whether a string can be repeated arbitrarily many times or not.

\begin{lemma}\label{lem:lem2}
Let $L$ be a set of graphs having at most $n$ vertices.
Let $S$ be a string.
If $S^{2n-1}$ contains none of the graphs in $L$, %
then $\Free(L)$ contains $\phi^{-1}(S^k)$ for all $k$.
\end{lemma}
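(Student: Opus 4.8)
The plan is to prove the contrapositive in spirit: assuming $S^{2n-1}$ contains no graph of $L$, I want to show $S^k$ contains no graph of $L$ for \emph{every} $k$, since $\Free(L)$ containing $\phi^{-1}(S^k)$ for all $k$ is exactly the statement that no $S^k$ contains a member of $L$. The key observation is that a graph $G\in L$ has at most $n$ vertices, so by the machinery of Section~\ref{sec:strings}, if $S^k$ contains $G$ then $S^k$ contains some representation $R\in\R_G$, and $R$ has at most $n$ blocks (Observation~\ref{obs:numrep}). I would therefore reduce the whole question to: if some representation $R$ with at most $n$ blocks is contained in $S^k$ for some large $k$, then it is already contained in $S^{2n-1}$.

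First I would set up the containment data. Suppose for contradiction that $S^{2n-1}$ contains no graph of $L$, but $S^k$ contains some $G\in L$ for some $k>2n-1$. Fix a witnessing representation $R\in\R_G$ with blocks $B_1,B_2,\dots,B_m$ where $m\le n$. By the definition of ``$S^k$ contains $R$'', each block $B_i$ is embedded as a substring of $S^k$, the blocks appear in order, and consecutive embedded blocks are $1$-disjoint. The heart of the argument is to bound how many copies of $S$ any single block can straddle and how many copies of $S$ the gaps between blocks can consume; because there are at most $n$ blocks, the total number of copies of $S$ touched by this embedding should be boundable by roughly $2n-1$.

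The main step, and the place I expect the real obstacle, is a \emph{shifting} or \emph{compression} argument showing that the embedding of $R$ can be relocated into a bounded prefix of copies of $S$. The idea is periodicity: since $S^k$ is periodic with period $\abs S$, any substring occurring starting at position $p$ also occurs starting at position $p-\abs S$ whenever $p>\abs S$. So I would slide the leftmost block back until it starts within the first copy of $S$, then push each subsequent block leftward into the next available copy while preserving the $1$-disjointness and the left-to-right order. Each block needs at most its own length plus one separating letter, and a block has fewer than $n$ letters; the delicate point is verifying that consolidating $m\le n$ blocks with their mandatory gaps never requires more than $2n-1$ copies of $S$ and that the periodicity-based shift never breaks a block across a seam in a way that changes the induced adjacencies. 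I would argue that because each block has length less than $\abs{S^2}$ in the relevant regime (or more carefully, because each block fits in a bounded number of copies), placing block $B_i$ so that it begins in copy number $2i-1$ and ends by copy $2i$ suffices, using $1$-disjointness to separate it from $B_{i+1}$, so all $m\le n$ blocks fit within $S^{2n-1}$.

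Finally, I would conclude that this relocated embedding witnesses $R$, and hence $G$, inside $S^{2n-1}$, contradicting the hypothesis. The cleanest way to present the obstacle-step is probably an explicit inductive placement: process the blocks $B_1,\dots,B_m$ in order, maintaining the invariant that after placing $B_i$ we have used at most $2i-1$ copies of $S$ and the next free position lies at a copy boundary, which lets the next block be embedded afresh using at most two more copies; terminating at $i=m\le n$ gives at most $2n-1$ copies, as required.
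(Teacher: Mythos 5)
Your overall strategy---compressing the embedding of a representation $R$ into a bounded prefix of $S^k$ by sliding blocks leftward through multiples of $\abs{S}$---is sound in outline and different in mechanics from the paper's proof, but the counting step you give to close it is false. You claim each block can be placed so that it ``begins in copy number $2i-1$ and ends by copy $2i$,'' justified by the assertion that each block has length less than $\abs{S^2}$. There is no such bound: the lemma must hold for \emph{every} string $S$, including $\abs{S}=1$ (say $S=0$, which the algorithm in Section~\ref{sec:main} does test), whereas a single block of a representation of a graph in $L$ can have length up to $n-1$. Such a block spans up to $n-1$ copies of $S$, so your invariant ``after placing $B_i$ we have used at most $2i-1$ copies, and the next block needs at most two more'' fails already at $i=1$. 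The quantity that actually saves the argument is not the length of any individual block but the \emph{total} length of all blocks, which is at most $n-1$ because $\psi(E)$ has exactly $\abs{V(G)}-1\le n-1$ symbols in $\{0,1\}$ (see the remarks before Observation~\ref{obs:numrep}). With that count your greedy compression does work: place $B_1$ starting in the first copy, and each subsequent $B_j$ at the least position that is at least two letters past the end of $B_{j-1}$ and congruent modulo $\abs{S}$ to its original starting position (so that periodicity guarantees the same substring occurs there). The last block then ends by position $\abs{S}+\textstyle\sum_j\abs{B_j}+(m-1)\abs{S}-1\le n\abs{S}+n-2\le(2n-1)\abs{S}$, so $R$, hence $G$, embeds in $S^{2n-1}$, giving the desired contradiction. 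As written, though, your placement scheme is a step that genuinely fails, not a mere omission of routine detail.

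For comparison, the paper avoids all of this alignment bookkeeping with a minimal-counterexample argument: take $M$ minimal such that $S^M$ contains some $G\in L$; then $M\ge 2n$, and since the embedded blocks occupy at most $n-1$ letters in total, at most $n-1$ of the $M$ copies of $S$ are touched by the embedding, so among $n$ disjoint pairs of consecutive copies some pair is entirely untouched. Deleting one copy of that pair leaves every block intact, preserves the order of blocks, and preserves $1$-disjointness (the other untouched copy still separates any blocks that straddled the deletion), so $S^{M-1}$ contains $G$, contradicting minimality of $M$. Both proofs ultimately rest on the same fact---total block length at most $n-1$---but the deletion argument sidesteps the per-block placement analysis where your write-up goes wrong.
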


\begin{proof}
Suppose that the lemma is false. Let $M$ be the minimum number such that $S^M$ contains at least one graph $G\in L$. 
This means that there exists a representation $R\in\R_G$ which is contained in $S^M$. Fix one such embedding of $R$ in $S^M$. 
Since $M\geq 2n$ and  $\abs{V(G)}\leq n$, 
there exist two consecutive copies of $S$ neither of which is used in the embedding of $R$ in $S^M$. We can therefore eliminate one of these two copies of $S$ while still ensuring that the blocks of $R$ are $1$-disjoint (to ensure $R$ can still be embedded in the resulting string). That is, $S^{M-1}$ still contains $R$,
which is a contradiction since $M$ is the minimum number such that $S^M$ contains at least one graph in $L$. %
\end{proof}

\section{Proof of the main result}\label{sec:main}

In this section, we prove our main result, Theorem~\ref{thm:main}. Recall the statement of our main theorem.

\begin{thm:main}
For a given finite set $L$ of graphs, 
there exists an algorithm to decide whether $\Free(L)$ contains infinitely many non-isomorphic prime graphs.
\end{thm:main}

Let $\mathcal{G}_n$ be the set that consists of 
the $1$-subdivision of $K_{1,n}$ and its complement, 
the line graph of $K_{2,n}$ and its complement, 
the thin spider with $n$ legs and its complement, 
the half-graph of height $n$, 
the graph $H'_{n,I}$, and
the graph $H^*_n$  and its complement. %
In other words, $\mathcal{G}_n$ contains one representative of each
type of graph in Theorem~\ref{thm:CKOS} except for chains. Note that
it is routine to check for each $n$ that all the graphs in $\mathcal{G}_{n}$ are prime.

By Theorem~\ref{thm:CKOS}, a large prime graph that does not contain a chain of length $n$ must contain a graph in $\mathcal{G}_n$.
For a graph in $\mathcal{G}_n$, 
it is easy to deduce the following lemma by the definition of $\mathcal{G}_n$. 
For an example, suppose that a graph $G$ with $n$ vertices is an induced subgraph of the $1$-subdivision of $K_{1,N+1}$. 
Let $v$ be a vertex of degree $N+1$ in the $1$-subdivision of $K_{1,N+1}$, let $u_1, u_2, \ldots, u_{N+1}$ be neighbors of $v$, and let $v_i$ be a neighbor of $u_i$ other than $v$ for each $i$. 
Now, there exist $u_i$ and $v_i$ such that neither $u_i$ nor $v_i$ are in $G$. We delete $u_i$ and $v_i$ from the $1$-subdivision of $K_{1,N+1}$ to obtain the $1$-subdivision of $K_{1,N}$ that contains $G$ as an induced subgraph.  
We can prove similarly for other cases. 

\begin{lemma}\label{lem:case}
Let $G$ be a graph on $n$ vertices and let $N$ be an integer with $N\geq \max{\{n,3\}}$.
If $G$ is an induced subgraph of some graph in $\mathcal{G}_{N+1}$,
then there exists a graph $H$ in $\mathcal{G}_N$ such that $G$ is an induced subgraph of $H$.\qed
\end{lemma}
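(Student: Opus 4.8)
The plan is to handle each of the graph types comprising $\mathcal{G}_{N+1}$ in turn, using that every such graph is assembled from $N+1$ mutually interchangeable or order-isomorphic \emph{levels}, together with at most one additional \emph{apex} vertex $w$. In all cases a level is a pair $(v_i,u_i)$, except for the $1$-subdivision of $K_{1,N+1}$, where the $N+1$ ``arms'' (each a subdivision vertex together with its leaf) play the role of levels and the center is the apex. The crucial observation is that since $\abs{V(G)}=n\le N$ while there are $N+1$ levels, and since no level contains the apex, $G$ cannot meet every level: meeting all $N+1$ of them would require at least $N+1$ vertices. Hence some level $k$ is entirely avoided by $G$, and I would delete it.

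For the $1$-subdivision of $K_{1,N+1}$, the line graph of $K_{2,N+1}$, and the thin spider with $N+1$ legs, the levels are fully interchangeable, so deleting the free level $k$ and relabelling yields precisely the corresponding member of $\mathcal{G}_N$, into which $G$ then embeds. For the half-graph of height $N+1$ and for $H'_{N+1,I}$ the levels are ordered, but every adjacency depends only on the relative order of the indices---and in $H'$ the apex $w$ is joined to \emph{all} of the $v_i$---so applying the order-preserving relabelling of the remaining $N$ levels recovers the half-graph of height $N$, respectively $H'_{N,I}$, exactly.

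The step I expect to be the main obstacle is $H^*_{N+1}$, because its apex $w$ is joined only to $v_1$, so the first level must not be deleted. Here I would argue by cases. If $w\in V(G)$, then $G$ uses at most $N-1$ of the $v,u$-vertices, leaving at least two free levels; at least one of these has index $\ge 2$, and deleting it preserves $v_1$ together with the edge $wv_1$, so relabelling yields $H^*_N$. If $w\notin V(G)$, then $G$ embeds in $H^*_{N+1}\setminus w$, which has $N+1$ interchangeable levels and no apex; deleting any free level gives a graph isomorphic to $H^*_N\setminus w$, which is an induced subgraph of $H^*_N$, so $G\ind H^*_N$.

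Finally, the complements appearing in $\mathcal{G}_{N+1}$ reduce immediately to the primal cases: complementation preserves membership in the family and the number of vertices, so $G\ind\overline{X_{N+1}}$ is equivalent to $\overline{G}\ind X_{N+1}$ with $\abs{V(\overline G)}=n\le N$; applying the argument above to $\overline G$ gives $\overline G\ind X_N$ for some $X_N\in\mathcal{G}_N$, whence $G\ind\overline{X_N}\in\mathcal{G}_N$.
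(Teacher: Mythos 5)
Your proposal is correct; the paper itself omits the proof of this lemma as ``trivial,'' and your level-deletion argument --- find a level $(v_k,u_k)$ (or arm) untouched by $G$, delete it, relabel order-preservingly, with special care for $H^*_{N+1}$ (whose apex is tied to the first level) and with complements reduced to the primal cases --- is exactly the intended routine verification. The only slip is cosmetic: the levels of $H^*_{N+1}\setminus w$ are ordered rather than ``interchangeable,'' but your deletion-and-relabelling step works verbatim there, so nothing breaks.
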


Finally, we give the proof of our main theorem, providing Algorithm~\ref{alg:main}.

\begin{proof}[Proof of Theorem~\ref{thm:main}]
Let $n\geq 3$ be the minimum integer such that every graph in $L$ has at most $n$ vertices. 
By the contrapositive statement to Lemma~\ref{lem:case}, if $\mathcal{G}_{n}$ has a graph in $\Free(L)$ then 
$\Free(L)$ must contain a graph from $\mathcal{G}_N$ for every $N\geq n$. Hence $\Free(L)$ has infinitely many non-isomorphic prime graphs. 

Now, we may assume that every graph in $\mathcal{G}_{n}$ is not in $\Free(L)$. 
By Theorem~\ref{thm:CKOS}, 
it is enough to decide whether $\Free(L)$ has infinitely many non-isomorphic prime graphs induced by chains. 
If there exists a string $S$ of length at most $2^n$ such that $\phi^{-1}(S^{2n-1})\in\Free(L)$, 
then by Proposition~\ref{prop:chain} and Lemma~\ref{lem:lem2}, 
$\Free(L)$ has infinitely many non-isomorphic prime graphs induced by chains. 

On the other hand, if $\phi^{-1}(S^{2n-1})\not\in\Free(L)$ for every string $S$ of length at most $2^n$, %
then by Lemma~\ref{lem:lem3} the maximum length of a chain contained in $\Free(L)$ is less than $\lceil\frac{(n-1)4^n+1}{3}\rceil(2^{n-2}+n-1)$, which implies that $\Free(L)$ has only finitely many non-isomorphic prime graphs. 
\end{proof}

\begin{algorithm}
  \caption{Does $\Free(L)$ contain infinitely many prime graphs?} \label{alg:main}
\begin{algorithmic}[1]
\State{Let $L$ be the input set of graphs and let $n\geq 3$ be the minimum integer such that every graph in $L$ has at most $n$ vertices.}
\If{$\mathcal{G}_{n}$ has a graph in $\Free(L)$}
\State{output YES.}
\ElsIf{there exists a string $S$ of length at most $2^n$ such that the string $S^{2n-1}$ contains no graph in $L$}
\State{output YES.}
\Else
\State{output NO.}
\EndIf
\end{algorithmic}
\end{algorithm}

\section{Concluding remarks}

\paragraph{Complexity of the procedure.} We have not made any particular effort to optimize the procedure described above. The majority of the work lies in determining whether a hereditary class $\Free(L)$ admits arbitrarily long chains or not, and here one may need to exhaust over all $2^{2^n+1}-1$ chains of length at most $2^{n}$, where $n = \max_{G\in L}|G|$. 
By contrast, Lemma~\ref{lem:case} shows that in order to check whether $\Free(L)$ contains arbitrarily large prime graphs of the other types listed in Theorem~\ref{thm:CKOS}, it suffices to check whether each of the 10 graphs in $\mathcal{G}_n$ (each having at most $2n+1$ vertices) contains some graph in $L$.

In the analogous problem of deciding whether a hereditary class of permutations contains only finitely many simple permutations, a recent paper due to Bassino, Bouvel, Pierrot and Rossin~\cite{bassino:a-polynomial-al:b} establishes an algorithm with run time $O(nk\log (nk) + n^{2k})$, where $n$ is the size of the largest forbidden permutation, and $k$ is the number of forbidden permutations. It is quite possible that a similar detailed analysis of chains in graphs could lead to a much more efficient algorithm.

\paragraph{Finding all the prime graphs in a class.} If our decision procedure returns YES, then in theory it could provide a `certificate' of an infinite family of prime graphs that the class contains. On the other hand, if the procedure returns NO, then Lemmas~\ref{lem:lem3} and~\ref{lem:case} give bounds on the number of vertices that the largest prime graph in the class can contain. However, the following result (recently re-discovered by Chudnovsky and Seymour~\cite{chudnosvky:growing-without:}), gives a more practical method that may terminate sooner:

\begin{proposition}[Schmerl and Trotter~\cite{schmerl:critically-inde:}]
Let $n\ge 3$ be an integer. Every prime graph on $n$ vertices contains a prime induced subgraph on $n-1$ or $n-2$ vertices.
\end{proposition}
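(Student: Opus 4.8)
The plan is to convert this ``cutting-down'' statement into an ``adding-on'' statement by an extremal argument: I want to show that among all prime induced subgraphs of $G$ that are \emph{proper} (omit at least one vertex), the largest has at least $n-2$ vertices. First I would dispose of the small cases. There is no prime graph on exactly three vertices, because on three vertices some pair is always uniform to the third and hence forms a homogeneous set; so $n=3$ is vacuous. The only prime graph on four vertices is $P_4$ (every cograph on at least three vertices has a homogeneous set, so a prime graph on four vertices must contain, hence equal, an induced $P_4$), and deleting its two endpoints leaves a prime $K_2$. So assume $n\ge 5$ and let $H$ be a prime induced subgraph of $G$ with $\abs{V(H)}$ maximum subject to $\abs{V(H)}<n$. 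Since a prime graph on at least four vertices contains an induced $P_4$ and $n>4$, this $P_4$ is a proper prime subgraph of size $4$, so $\abs{V(H)}\ge 4$.

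The engine will be a \emph{growing lemma}: if $H$ is a prime induced subgraph of a prime graph $G$ with $4\le\abs{V(H)}<n$, then $G$ has a prime induced subgraph $H'$ with $V(H)\subsetneq V(H')$ and $\abs{V(H')}\le\abs{V(H)}+2$. Granting this, I apply it to my maximal $H$: the resulting $H'$ is prime and strictly larger than $H$, so by maximality it cannot be proper, forcing $H'=G$ and hence $n\le\abs{V(H)}+2$. Combined with $\abs{V(H)}\le n-1$, this gives $\abs{V(H)}\in\{n-1,n-2\}$, and $H$ is the desired subgraph.

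To prove the growing lemma I would proceed as follows. Since $G$ is prime and $V(H)$ is a proper subset of size at least two, $V(H)$ is not a homogeneous set, so some $w\notin V(H)$ has both a neighbour and a non-neighbour in $V(H)$. If $H+w$ is prime, we are done (one vertex added). Otherwise I analyse a proper homogeneous set $M$ of $H+w$. Using that $w$ splits $V(H)$, one checks $w\in M$; and since every vertex of $V(H)\setminus M$ is uniform to $M$, the set $M\setminus\{w\}$ is a homogeneous set of $H$, which by primality of $H$ forces $\abs{M\setminus\{w\}}\le 1$, i.e.\ $M=\{w,x\}$. The same principle gives uniqueness: two distinct pairs $\{w,x\}$ and $\{w,y\}$ would overlap in $w$, so their union would be homogeneous in $H+w$, whence $\{x,y\}$ would be a $2$-element homogeneous set of $H$, a contradiction. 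Thus either $H+w$ is prime, or $w$ has a unique twin $x$ in $H+w$. In the latter case, since $G$ is prime, $\{w,x\}$ is not a homogeneous set of $G$, so some vertex $z$ splits it; as $w,x$ are twins in $H+w$, this $z$ must lie outside $V(H)\cup\{w\}$, and I would take $H'=H+w+z$.

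The hard part will be the last step: verifying that $H+w+z$ is prime. Adding $z$ destroys the twin pair $\{w,x\}$, but a short analysis of a putative proper homogeneous set $N$ of $H+w+z$ shows $N$ must contain $z$ and be either $\{z,w,x\}$ or a new twin pair $\{z,p\}$; excluding these two possibilities is exactly where the real combinatorial work lies. I expect this to require either a careful choice of $z$ together with a terminating iteration, or, more concretely, the Schmerl--Trotter classification of \emph{critically indecomposable} graphs (prime graphs, such as $P_4$ and the half-graphs, admitting no primality-preserving single-vertex deletion). That classification describes such graphs, up to complementation, as two explicit linearly ordered families, each possessing a canonical pair of extreme vertices whose deletion yields a smaller prime graph of the same family; granting it, the critically-indecomposable case --- the only situation in which two vertices genuinely must be removed --- follows at once.
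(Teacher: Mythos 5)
Your overall reduction is sound: the small cases are handled correctly (under this paper's definition, $K_2$ is vacuously prime, so deleting both endpoints of $P_4$ works for $n=4$), and the extremal argument correctly converts the statement into your ``growing lemma,'' which is in fact a classical theorem (due to Ehrenfeucht--Rozenberg and Schmerl--Trotter: a prime induced subgraph on at least $3$ vertices of a prime graph, missing at least two vertices, extends to a prime induced subgraph on two more vertices). Note that the paper itself offers no proof of this proposition---it is quoted from Schmerl and Trotter---so the real question is whether your argument for the growing lemma stands on its own. It does not: the step you flag as ``the hard part'' is a genuine gap, and the specific construction you propose fails. Concretely, let $G$ have vertices $v_1,v_2,v_3,v_4,w,z,t$ and edges $v_1v_2$, $v_2v_3$, $v_3v_4$, $v_2w$, $v_2z$, $wz$, $zt$; one checks $G$ is prime. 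Take $H=G[\{v_1,v_2,v_3,v_4\}]\cong P_4$. Following your procedure, $w$ is a legitimate choice of splitter of $V(H)$; then $H+w$ is not prime, with unique twin pair $\{w,v_1\}$; and the \emph{only} vertex of $G$ splitting $\{w,v_1\}$ is $z$. But $H+w+z$ has the homogeneous set $\{v_1,w,z\}$ (the vertex $z$ attaches to $V(H)\setminus\{v_1\}$ exactly as $w$ does), so it is not prime, and your procedure reaches a dead end: the valid two-vertex extension here is $H+z+t$, which cannot be found by first committing to $w$ and then to an arbitrary splitter of the resulting twin pair.

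This shows the difficulty is not a routine verification: a greedy, purely local choice of the second vertex cannot work, and the standard proof instead argues globally, partitioning \emph{all} vertices outside $V(H)$ into three classes (vertices uniform to $V(H)$, vertices having a twin in $V(H)$, and vertices $v$ with $H+v$ prime) and using primality of $G$ to find a compatible pair across these classes. Your proposed fallback---invoking the Schmerl--Trotter classification of critically indecomposable graphs---would indeed suffice, but that classification is the main theorem of the very paper being cited and is strictly deeper than the proposition you are asked to prove, so appealing to it without proof leaves the argument incomplete (and close to circular). As it stands, your writeup correctly identifies the skeleton of the known proof and correctly isolates where the content lies, but the content itself is missing.
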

Furthermore, the only prime graphs that do not contain a prime graph on 1 fewer vertices are the half-graphs of height $n$, and their complements. Thus, to list all prime graphs in a class, one can successively generate and check for membership the prime graphs of each order, and halt as soon as one finds two consecutive integers where the hereditary class contains no prime graphs of that order. 

\paragraph{Classes with infinitely many minimal forbidden graphs}
One may ask whether it is possible for a hereditary class $\C=\Free(L)$ to contain only finitely many prime graphs when $L$ is an \emph{infinite} minimal set of forbidden graphs. The answer to this is no: any hereditary class containing only finitely many prime graphs possesses the property of being \emph{labelled} well-quasi-ordered (see~\cite[Theorem 2]{atminas:labelled-induced:}), and any such class is defined by a finite set of minimal forbidden graphs (this latter observation is essentially due to Pouzet~\cite{pouzet:un-bel-ordre-da:}). 

The same observation (that a hereditary class with only finitely many prime graphs is defined by finitely many minimal forbidden graphs) also leads to a quick proof of a special case of the results concerning `prime extensions': namely that a finite set of prime graphs necessarily only has finitely many prime extensions (see Giakoumakis and Olariu~\cite{GO2007}). 

\paragraph*{Acknowledgment}
The authors would like to thank the reviewer for
providing a careful review and an alternative proof of the theorem
in terms of monadic second-order logic.

%\bibliographystyle{abbrv}
%\bibliography{bib-primegraphs}

\end{document}